\newtheorem{example}{Example}
\newtheorem{definition}{Definition}
\newtheorem{theorem}{Theorem}
\newtheorem{proposition}{Proposition}
\newtheorem{assumption}{Assumption}
\newif\if@restonecol
\newcommand{\tabincell}[2]{\begin{tabular}{@{}#1@{}}#2\end{tabular}}
\begin{document}

\title{\LARGE \bf {State estimation of timed automata under partial observation}}

\author[1,2]{Chao Gao}
\author[3]{Dimitri Lefebvre}
\author[2]{Carla Seatzu}
\author[1,4]{Zhiwu Li}
\author[2]{Alessandro Giua}


\affil[1]{\small{School of Electro-Mechanical Engineering, Xidian University, Xi'an 710071, China}}
\affil[2]{\small{DIEE, University of Cagliari, Cagliari 09124, Italy}}
\affil[3]{\small{GREAH Laboratory, Université Le Havre Normandie, Le Havre 76600, France}}
\affil[4]{\small{Institute of Systems Engineering, Macau University of Science and Technology, Macau  999078, China}}
\date{}

\maketitle

\begin{abstract}
In this paper, we consider partially observable timed automata endowed with a single clock.
A time interval is associated with each transition specifying at which clock values it may occur. In addition, a resetting condition associated to a transition specifies how the clock value is updated upon its occurrence.
This work deals with the estimation of the current state given a timed observation, i.e., a succession of pairs of an observable event and the time instant at which the event has occurred.
The problem of state reachability in the timed automaton is reduced to the reachability analysis of the associated zone automaton, which provides a purely discrete event description of the behaviour of the timed automaton.
An algorithm is formulated to provide an offline approach for state estimation of a timed automaton based on the assumption that the clock is reset upon the occurrence of each observable transition.
\end{abstract}

\textbf{Keywords}: Discrete event system, timed automaton, state estimation.

\section{Introduction}
\label{sec:introduction}

In the area of \emph{discrete event systems} (DES), a large body of literature considers logical DES, where time is abstracted and only the order of occurrences of the events is taken into account. 
The problem of state estimation has received a lot of attention considering different observation structures 
in different formalisms, in particular automata \cite{hadjicostis2020estimation} and Petri nets \cite{Cabasino2010}.
The notion of \emph{observer} \cite{CL} contains all the information to reconstruct the set of current states consistent with an observation and allows to move most of the burdensome parts of the computations offline and can be used for online state estimation.
In addition, an observer is fundamental to address problems of feedback control \cite{yin2015uniform}, diagnosis and diagnosability \cite{sampath1996failure}, \cite{sampath1995diagnosability}, in addition to characterizing a large set of dynamical properties, such as detectability  \cite{shu2007detectability}, opacity  \cite{lin2011opacity} and resilience to cyber-attack \cite{carvalho2018detection}.

Significant contributions have also been provided in the framework of timed DES.
As an example, in \cite{Laiaiwen2019}, observers are designed for a particular class of weighted automata, where the time information of transition firability is given to weights.
References \cite{Lijun2022}, \cite{gao2020etfa}, \cite{gao2023ifac}, and \cite{lefebvre2023automatica} address the problem of state estimation of a class of timed automata under a rather restrictive scenario where the endowed single clock is reset to zero after each event occurrence.
The state estimation of timed DES has also been considered in \cite{Basile2017} and \cite{He2019}, but no general approach concerning the construction of an observer for a general class of timed DES exists.


Another active area of research is that of \emph{hybrid systems} (HS) \cite{alur1995algorithmic,henzinger2000theory},
characterized by the interplay between discrete event and time-driven dynamics.
These systems can be modeled by \emph{hybrid automata}, and in particular by \emph{timed automata} \cite{c1} concerning time elapsing as time-driven dynamics.
State estimation of timed automata has also been studied.
In \cite{c1}, the reachability of locations can be analysed by searching the finite quotient of a timed automaton with respect to the region equivalence defined over the set of all clock interpretations.
However, the reachability is analyzed regardless of any observations.
Tripakis proposes an online diagnoser in \cite{Tripakis2002} that keeps track of all the possible discrete states.
Thereby, state estimation problem is theoretically solvable.
In \cite{Bouyer2018} and \cite{Bouyer2021}, timed markings are used for representing the closure under silent transitions.
Observe that in \cite{Tripakis2002,Bouyer2018,Bouyer2021}, only online estimators have been proposed, and no general approach concerning the construction of an observer is known.
In particular the model we consider in this paper, that we call \emph{timed finite automaton} can be either seen as a finite state automaton endowed with a clock or as a timed automaton \cite{c1} whose edges are labeled.


A related problem in a decentralized setting is studied in \cite{giua2017decentralized}, where the asynchronous polling of distributed sub-systems is called \emph{synchronization}.
Concerning timed automata, references \cite{DSouza2000}, \cite{Komenda2010}, and \cite{Lin2019} discuss how to define concurrent composition for timed automata, which can be used to construct complex models.


This paper considers a partially observable timed finite automaton (TFA) endowed with a single clock.
The logical structure specifies the sequences of events that the TFA can generate and the observations they produce.
The timed structure specifies the set of clock values that allow an event to occur and how the clock is reset upon the event occurrence.
A timed state of a TFA consists of a discrete state associated with the logical structure and the clock value.
Our objective is that of estimating the current discrete state of the automaton as a function of the current observation and of the current time.
The notion of $T$-reachability is proposed taking into account which discrete states can be reached with an evolution that produces a given observation and has a duration equal to $T$.
We prove that the problem of $T$-reachability for a TFA is reduced to the reachability analysis of the associated \emph{zone automaton}.
We assume that the firing of observable transitions resets the clock to a set that is a finite union of zones, and present a formal approach that allows one to construct offline an observer, i.e., a finite structure that describes the state estimation for all possible evolutions.
During the online phase to estimate the current discrete state, one just has to determine which is the state of the observer reached by the current observation and check to which interval (among a finite number of time intervals) the time elapsed between the last observed event occurrence belongs.
We believe that the proposed approach has a major advantage over existing online approaches for state estimation: it paves the way to address a vast range of fundamental properties (detectability, opacity, etc.) that have so far mostly been studied in the context of logical DES.


The rest of the paper is organized as follows.
Section \uppercase\expandafter{\romannumeral2} introduces the background of DES, timed automata and time semantics.
Section \uppercase\expandafter{\romannumeral3} formally state the problem of state estimation.
Section \uppercase\expandafter{\romannumeral4} illustrates the approaches to compute zones and to construct the
zone automaton.
Section 
\uppercase\expandafter{\romannumeral5} provides necessary and sufficient conditions for the reachability of a state of the timed automaton in terms of reachability analysis in the zone automaton.
Section \uppercase\expandafter{\romannumeral6} solves the problem of updating the state estimation according to the latest received {\em timed observation}.
Finally, Section \uppercase\expandafter{\romannumeral7} concludes this paper.

\section{Background}
A nondeterministic finite automaton (NFA) is a four-tuple $G = (X, E, \Delta, X_0)$, where $X$ is a finite set of states, $E$ is a finite set of events, $\Delta\subseteq X\times
E \times X$ is the transition relation and $X_0 \subseteq X$ is the set of initial states.
The set of events $E$ can be partitioned as $E = E_o \dot\cup E_{uo}$, where $E_o$ is the set of observable events, and $E_{uo}$ is the set of unobservable events.
We denote by $E^\ast$ the set of all finite strings on $E$, including the empty word $\varepsilon$.
The \emph{concatenation} $s_1\cdot s_2$ of two strings $s_1\in E^\ast$ and $s_2\in E^\ast$ is a string consisting of $s_1$ immediately followed by $s_2$. The empty string
$\varepsilon$ is an identity element of concatenation, i.e., for any string $s\in E^\ast$, it holds that $\varepsilon\cdot s= s = s\cdot \varepsilon$.
Given a string in $E^\ast$, the observation is defined via the observation projection $P_l: E^\ast \longrightarrow E_o^\ast$ defined as:
$P_l(\varepsilon)=\varepsilon $, and for all $s\in E^\ast$ and $e \in E$, it is $P_l(s\cdot e) = P_l(s)\cdot e$ if $e\in E_o$, or $P_l(s\cdot e)=P_l(s)$ if $e\in E_{uo}$.

Let $\mathbb{R}_{\geq 0}$ be the set of non-negative real numbers and $\mathbb{N}$ be the set of natural numbers.
A closed interval, i.e., closed on both sides, is denoted as $[m,n]$, while open or semi-open intervals are denoted as $(m,n)$,  $[m,n)$ or $(m,n]$.
We denote the set of all time intervals and the set of all closed time intervals as $\mathbb{I}$ and $\mathbb{I}_c$, respectively, where $\mathbb{I}_c \subseteq \mathbb{I}$.
The \emph{addition}\footnote{The addition operation is associative and commutative and can be extended to $n >2$ time intervals $\bigoplus\limits_{i=1}^n I_i = I_1 \bigoplus \cdots \bigoplus I_n$.} of $I_1$ and $I_2$ is defined as $I_1 \bigoplus I_2 =\{t_1+t_2\in \mathbb{R}_{\geq 0} \mid t_1\in I_1, t_2\in I_2\}$ and the \emph{distance range} between them as $D(I_1,I_2)=\{ |t_1 - t_2 | \ \mid t_1\in I_1, t_2\in I_2 \}$.
\begin{definition}
A \emph{timed finite automaton} (TFA) is a six-tuple $G=(X$, $E$, $\Delta$, $\Gamma$, $Reset$, $X_0)$ that operates under a single clock, where $X$ is a finite set of discrete states, $E$ is an alphabet, $\Delta\subseteq X\times E \times X$ is a transition relation, $\Gamma: \Delta \rightarrow
\mathbb{I}_c$ is a timing function, $Reset: \Delta \rightarrow \mathbb{I}_c\cup \{id\}$ is a clock resetting function such that for $\delta\in\Delta$, the clock is reset to be an integer value in a time interval $I\in \mathbb{I}_c$ ($Reset(\delta)= I$), or the clock is not reset ($Reset(\delta)= id$), and $X_0 \subseteq X$ is the set of initial discrete states.\hfill$\square$
\end{definition}

For the sake of simplicity, we assume that the clock is set to be 0 initially.
A transition $(x,e,x^\prime)\in \Delta$ denotes that the occurrence of event $e\in E$ leads to $x^\prime\in X$ when the TFA is at $x\in X$.
The time interval $\Gamma((x,e,x^\prime))$ specifies a range of clock values at which the event $e$ may occur, while $Reset((x,e,x^\prime))\in \mathbb{I}_c$ denotes the range of the clock values that are reset to be and $Reset((x,e,x^\prime))=id$ implies that the clock is not reset.
The \emph{set of output transitions at $x$} is defined as 
$O(x)=\{(x,e,x^\prime)\in \Delta \mid  e\in E, x^\prime \in X \}$, and \emph{the set of input transitions at $x$} is defined as $I(x)=\{(x^\prime,e,x)\in \Delta \mid  e\in E,  x^\prime \in X\}$.

A \emph{timed state} is defined as a pair $(x,\theta)\in X\times \mathbb{R}_{\geq 0}$, where $\theta$ is the current value of the clock.
In other words, a timed state $(x,\theta)$ keeps track of the current clock assignment $\theta$ while $G$ stays at state $x$.
The behaviour of a TFA is described via its timed runs.
A \emph{timed run} $\rho$ of length $k\geq 0$ from $t_0\in \mathbb{R}_{\geq 0}$ to $t_k\in \mathbb{R}_{\geq 0}$ is a sequence of $k+1$ timed states $(x_{(i)},\theta_{(i)})\in X\times \mathbb{R}_{\geq 0}\ (i =0,\cdots,k)$, and $k$ pairs $(e_i,t_i)\in E \times \mathbb{R}_{\geq 0}\ (i =1,\cdots,k)$, represented as
$\rho: (x_{(0)},\theta_{(0)}){\xrightarrow{(e_1,t_1)}}\cdots(x_{(k-1)},\theta_{(k-1)}){\xrightarrow{(e_k,t_k)}}(x_{(k)},\theta_{(k)})$
such that $(x_{(i-1)},e_i,x_{(i)})\in \Delta$, 
and the following conditions hold for all $i =1,\cdots,k$:
\begin{itemize}
    \item $\theta_{(i)}\in Reset((x_{(i-1)},e_i,x_{(i)}))$ and $\theta_{(i-1)}+ t_i-t_{i-1}\in \Gamma((x_{(i-1)},e_i,x_{(i)}))$, if $Reset((x_{(i-1)},e_i,x_{(i)}))\neq id$;
    \item $\theta_{(i)} = \theta_{(i-1)}+ t_i-t_{i-1}\in \Gamma((x_{(i-1)},e_i,x_{(i)}))$, if $Reset((x_{(i-1)},e_i,x_{(i)}))= id$.
\end{itemize}


We define the \emph{timed word generated by $\rho$} as $\sigma(\rho)=(e_1, t_1)(e_2, t_2)\cdots(e_k$, $t_k) \in (E\times \mathbb{R}_{\geq 0})^\ast$. We also define the \emph{logical word generated by
$\rho$} as $S(\sigma(\rho))= e_1 e_2\cdots e_k$ via a function defined as $S:(E \times \mathbb{R}_{\geq 0})^\ast \rightarrow E^\ast$.
For the timed run of length 0 as $\rho:(x_{(0)}, \theta_{(0)})$, we have $S(\sigma(\rho))= \varepsilon$ and
$\sigma(\rho)=\lambda$, where $\lambda$ denotes the \emph{empty timed word} in $E \times \mathbb{R}_{\geq 0}$.
For the timed word $\sigma(\rho)$ generated from an arbitrary timed run $\rho$, it is $\lambda\cdot\sigma(\rho)= \sigma(\rho)= \sigma(\rho)\cdot\lambda$.
The \emph{starting discrete state} and the \emph{ending discrete state} of a timed run $\rho$ are denoted by $x_{st}(\rho)= x_{(0)}$ and $x_{en}(\rho)= x_{(k)}$, respectively.
The \emph{starting time} and the \emph{ending time} of $\rho$ are denoted by $t_{st}(\rho)=t_0$ and $t_{en}(\rho)=t_k$, respectively.
In addition, the {\em duration of $\rho$} is denoted as $T(\rho)=t_k-t_0$.
The set of timed runs generated by $G$ is denoted as $\mathcal{R}(G)$.


A {\em timed evolution} of $G$ from $t_0\in\mathbb{R}_{\geq 0}$ to $t\in\mathbb{R}_{\geq 0}$ is defined by a pair $(\sigma(\rho),t) \in (E\times \mathbb{R}_{\geq 0})^\ast \times \mathbb{R}_{\geq 0}$, where $t_{st}(\rho)=t_0$ and $t_{en}(\rho)\leq t$.
Note that $t-t_{en}(\rho)$ is the time that the system stays at the ending discrete state $x_{en}(\rho)$.
Furthermore, we denote as $\mathcal{E}(G,t)=\{(\sigma(\rho), t) \mid (\exists \rho\in \mathcal{R}(G))\ x_{st}(\rho)\in X_0, t_{st}(\rho)=0, t_{en}(\rho)\leq t\}$
the \emph{timed language of $G$ from $0$ to $t\in \mathbb{R}_{\geq 0}$}, that contains all possible timed evolutions of $G$ from $0$ to $t$.


\begin{definition}
Given a TFA $G=(X, E, \Delta, \Gamma, Reset, X_0)$ and a time instant $T\in\mathbb{R}_{\geq 0}$, a discrete state $x^\prime\in X$ is said to be \emph{$T$-reachable} from $x\in X$ if there exists a timed evolution $(\sigma(\rho),t)\in (E\times \mathbb{R}_{\geq 0})^\ast \times \mathbb{R}_{\geq 0}$ of $G$ such that $t-t_{st}(\rho)=T$, $x_{st}(\rho)=x$, and $x_{en}(\rho)= x^\prime$.
In addition, $x^\prime$ is said to be \emph{unobservably $T$-reachable} from $x$ if $x^\prime$ is $T$-reachable from $x$ with a timed evolution $(\sigma(\rho),t)$ such that $S(\sigma(\rho))\in E_{uo}^\ast$.\hfill$\square$
\end{definition}

In simple words, $x^\prime$ is $T$-reachable from $x$ if a timed evolution leads the system from $x$ to $x^\prime$ with an elapsed time $T$.
If there exists such a timed evolution that produces no observation, $x^\prime$ is unobservably $T$-reachable from $x$.

\begin{example}
Given a TFA $G=(X$, $E$, $\Delta$, $\Gamma$, $Reset$, $X_0)$ in Fig.~\ref{tfaG}(a) with $X = \{x_0, x_1, x_2, x_3, x_4\}$, $E = \{ a, b, c\}$, $\Delta=\{(x_0, c, x_1)$, $(x_0, b, x_2)$, $(x_1, a, x_4)$, $(x_2, c, x_3)$, $(x_3, a, x_2)$, $(x_4, b, x_3)\}$, the initial state in $X_0=\{x_0\}$ is marked by an input arrow.
The information given by the timing function $\Gamma$ and the clock resetting function $Reset$ defined in Fig.~\ref{tfaG}(b) is presented on the edges.
Given an edge denoting a transition $\delta\in \Delta$, the label $\theta\in\Gamma(\delta)?$ on the edge specifies if $\delta$ is enabled with respect to $\theta$; the label $\theta:\in Reset(\delta)$ (resp., $\theta:= id$) on the edge specifies to which range $\theta$ belongs (resp., specifies that the clock is not reset) after the transition is fired.
Consider a timed run $\rho: (x_{0},0)\stackrel{(b,0.5)}{\longrightarrow}(x_{2},0.5)\stackrel{(c,2)}{\longrightarrow}(x_{3},2)\stackrel{(a,2)}{\longrightarrow}(x_{2},0)$ that starts from $x_{st}(\rho) =x_0$ at $t_{st}(\rho)=0$ and terminates in $x_{en}(\rho) =x_2$ at $t_{en}(\rho)=2$.
Three transitions $(x_0,b,x_2)$, $(x_2,c,x_3)$, and $(x_3,a,x_2)$ occur at time instants $t_1=0.5$, $t_2=2$ and $t_3=2$, respectively.
The transitions $(x_0,b,x_2)$ and $(x_2,c,x_3)$ do not lead the clock to be reset, while $(x_3,a,x_2)$ resets the clock to $0$.
Given a timed evolution $(\sigma(\rho),2)$, it follows that $x_2$ and $x_3$ are $2$-reachable from $x_0$.\hfill$\square$
\end{example}

\begin{figure}[h]
\subfigure[A TFA $G$.]{
\begin{minipage}{0.5\linewidth}
\includegraphics[width=5.5cm]{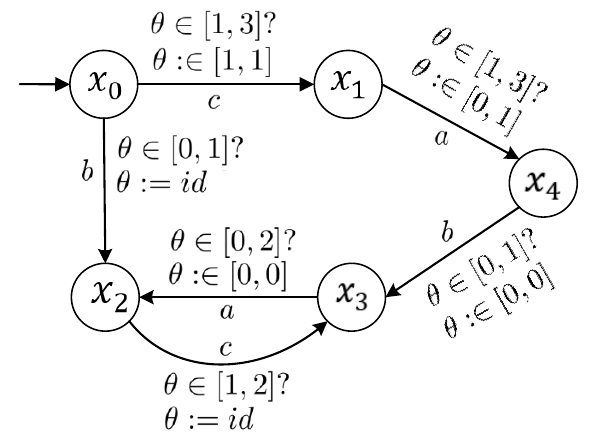}\\
\end{minipage}
}
\hfill
\subfigure[Timing function and clock resetting function.]{
\begin{minipage}{0.45\linewidth}
\centering
\scalebox{0.8}{
\begin{tabular}{c|c|c}
\hline
$\delta\in\Delta$ & $\Gamma(\delta)$ & $Reset(\delta)$\\
\hline
$(x_0, c, x_1)$ & $[1,3]$ & $[1,1]$\\
$(x_0, b, x_2)$ & $[0,1]$ & $id$\\
$(x_1, a, x_4)$ & $[1,3]$ & $[0,1]$\\
$(x_2, c, x_3)$ & $[1,2]$ & $id$\\
$(x_3, a, x_2)$ & $[0,2]$ & $[0,0]$\\
$(x_4, b, x_3)$ & $[0,1]$ & $[0,0]$\\
\hline
\end{tabular}
}
\end{minipage}
}
\caption{A TFA $G$ w.r.t. the given timing function and clock resetting function.}\label{tfaG}
\end{figure}

\section{Problem Statement}
In this work we model a partially observed timed plant as a TFA $G=(X, E, \Delta, \Gamma, Reset, X_0)$ with a partition of the alphabet into observable and unobservable events: $E=E_o\dot\cup E_{uo}$.
Next we preliminarily define a \emph{projection function} on timed words.

\begin{definition}
Given a TFA $G$ with $E=E_o\dot\cup E_{uo}$, a \emph{projection function} $P : (E\times \mathbb{R}_{\geq 0})^\ast \longrightarrow (E_o\times
\mathbb{R}_{\geq 0})^\ast$ is defined as
$P(\lambda)=\lambda$, and
$P(\sigma(\rho)\cdot(e,t)) = P(\sigma(\rho))$ if $e\in E_{uo}$, or $P(\sigma(\rho)\cdot(e,t)) = P(\sigma(\rho))\cdot(e,t)$ if $e\in E_{o}$,
for the timed word $\sigma(\rho)\in (E\times \mathbb{R}_{\geq 0})^\ast$ generated from any timed run $\rho\in\mathcal{R}(G)$ and for all $(e, t)\in E\times \mathbb{R}_{\geq 0}$.
Given a timed evolution $(\sigma(\rho),t)$, the pair $(\sigma_o, t)= (P(\sigma(\rho)),t)$ is said to be the \emph{timed observation}.
\hfill$\square$
\end{definition}

\begin{definition}
Given a TFA $G$ with $E=E_o\dot\cup E_{uo}$, and a \emph{timed observation} $(\sigma_o, t)$,
$\mathcal{S}(\sigma_o, t)=\{(\sigma(\rho), t)\in \mathcal{E}(G,t)| P(\sigma(\rho))= \sigma_o \}$
is said to be the\emph{ set of timed evolutions consistent with $(\sigma_o, t)$}, i.e., the set of timed evolutions that can be generated by $G$ from $0$ to $t$ producing the timed observation $(\sigma_o, t)$; meanwhile
$\mathcal{X}(\sigma_o, t)=\{x_{en}(\rho) \in X |(\sigma(\rho),t)\in \mathcal{S}(\sigma_o, t) \}$
is said to be \emph{the set of discrete states consistent with $(\sigma_o, t)$}, i.e., the set of discrete states in which $G$ may be, after $(\sigma_o, t)$ is observed.\hfill$\square$
\end{definition}

This work aims at calculating the set $\mathcal{X}(\sigma_o, t)$, which includes the discrete states reached by each timed evolution $(\sigma(\rho),t)$ consistent with $(\sigma_o, t)$. In addition, this work also provides a range of the possible clock values associated with each timed evolution $(\sigma(\rho),t)$.



\section{Zone automaton}


The notion of \emph{zone automaton} has been introduced in \cite{gao2023ifac} to provide a purely discrete event description of the behaviour of a given timed automaton endowed with a single clock reset at each event occurrence and bounded by the maximal dwell time at each discrete state.
The zones associated with a discrete state $x$ partition the clock values at $x$.
The timed automaton in \cite{gao2023ifac} can be seen as a particular case of the TFA in this paper.
In this section, a new algorithm is provided to compute the zones associated with a given discrete state.
After that, we illustrate how to construct the zone automaton based on the zone automaton in \cite{gao2023ifac}.
We first introduce the following definitions.

\begin{definition}
\label{def:regions}
The \emph{set of regions} of a discrete state $x\in X$ is defined as $R(x)=\{[m_x,m_x], (m_x,m_x+1), [m_x+1,m_x+1], (m_x+1,m_x+2), \cdots$, $[M_x,M_x] \}$,
where $m_x$ (resp., $M_x$) is the minimal (resp., maximal) integer in 
$\{\Gamma(\delta)|\delta\in O(x)\vee (\delta\in I(x), Reset(\delta)=id)\}\cup \{Reset(\delta)\neq id|\delta\in I(x)\}. \hfill\square$

\end{definition}

The integer $m_x$ (resp., $M_x$) represents the minimal (resp., maximal) clock value that can enable an output transition at $x$ and that can reach $x$ by an input transition.
Note that for a transition $\delta$ inputting state $x$, we search minimal (resp., maximal) value in $\Gamma(\delta)$ if $\delta$ does not reset the clock or $Reset(\delta)$ if $\delta$ resets the clock.   
The regions of $x$ include the integers from $m_x$ to $M_x$ and the open segments between the integers.
Given $r=[k,k]\in R(x)$ (resp., $r=(k,k+1)\in R(x)$), where $k=m_x,\cdots, M_x-1$, its successive region is denoted as $succ(r)=(k,k+1)$ (resp., $succ(r)=[k+1,k+1]$).
For instance, given the discrete state $x_0$ of the TFA $G$ in Fig.~\ref{tfaG}, we have $m_{x_0}=0$ , $M_{x_0}=3$, and $R(x_0)=\{[0,0], (0,1), [1,1], \cdots, [3,3]\}$.

\begin{definition}
Given a TFA $G=(X$, $E$, $\Delta$, $\Gamma$, $Reset$, $X_0)$, \emph{the set of output transitions at $(x,r)\in X\times \bigcup\limits_{x\in X} R(x)$} is defined as 
$O(x,r)=\{(x,e,x^\prime)\in \Delta |  e\in E, x^\prime \in X, r\subseteq \Gamma((x,e,x^\prime))\}$,
and \emph{the set of input transitions at $(x,r)$} is defined as 
$I(x,r)= 
\{(x^\prime,e,x)\in \Delta \mid e\in E, (r\subseteq  Reset((x^\prime,e,x))\neq id)\vee (Reset((x^{\prime},e,x))=id, r\subseteq \Gamma ((x^{\prime},e,x)))\} $.
Obviously, $O(x)=\bigcup\limits_{r\in R(x)}O(x,r)$ and $I(x)=\bigcup\limits_{r\in R(x)}I(x,r)$.
$\hfill\square$ 
\end{definition}

The set of output (resp., input) transitions at a timed state $(x,r)$ includes all the transitions that can fire from $x$ (resp. reach $x$) with a clock value in $r$. 
Based on this definition, Algorithm~\ref{alg:zones} merges the regions $r_k, r_{k+1}, \cdots, r_{k^\prime}\in R(x)$ into a zone if the following conditions hold for $i=k+1,\cdots, k^\prime$: (a) $r_i = succ(r_{i-1})$; (b) $I(x,r_k)=I(x,r_i)$, $O(x,r_k)=O(x,r_i)$;
(c) $Reset(\delta)\neq id$ for all $\delta\in I(x,r_i)\cup O(x,r_i)$.
For a region $r\in R(x)$ and a transition $\delta\in I(x,r)\cup O(x,r)$ such that $Reset(\delta)= id$, the region $r$ is included in $Z(x)$ and no merge is done with other regions. 

The set of zones $Z(x)$ partition the clock values $[0,+\infty)$ at $x$ into disjoint segments according to the firability of the input and output transitions.
The maximum number of zones at $x$ is denoted as $Q_x=2(M_x-m_x)+1$.
Note that the partition of zones according to Algorithm~\ref{alg:zones} is not optimal, one can find a more compact partition by further merging the zones associated with $\delta$ where $Reset (\delta) = id$.


\begin{algorithm}
\caption{Computation of the set of zones $Z(x)$}
\label{alg:zones}
\KwIn{A TFA $G=(X, E, \Delta, \Gamma, Reset, X_0)$, a discrete state $x\in X$}
\KwOut{The set of zones $Z(x)$}
Initialization: 
let $m=m_x$, $M=M_x$,
$r=[m,m]$, $z=r$, $Z(x)=\emptyset$\

\While{$r\neq [M,M]$}
{
let $r^{\prime} =succ(r)$

\If{$[O(x,r)=O(x,r^{\prime})]\wedge [I(x,r)=I(x,r^{\prime})] \wedge [(\forall \delta \in O(x,r^{\prime}) \cup I(x,r^{\prime})) Reset(\delta)\neq id]$}
{
let $z=z\cup r^\prime $
}

\Else
{
let $Z(x)=Z(x)\cup \{z\}$ and $z=r^{\prime} $

}
let $r=r^{\prime}$
}
let $Z(x)=Z(x)\cup \{z\}\cup \{(M,+\infty)\}$
\end{algorithm}

\begin{example}
Consider the TFA $G$ in Fig.~\ref{tfaG}.
Algorithm~\ref{alg:zones} provides the set of zones $Z(x_0)=\{[0,0], (0,1), [1,1], (1,3], (3,+\infty)\}$.
In detail, the regions $[0,0], (0,1), [1,1]$ remain in $Z(x_0)$ and do not merge with other regions due to $Reset((x_0,b,x_2))=id$.
The zone $(1,3]$ is obtained by merging the regions $(1,2)$, $[2,2]$, $(2,3)$ and $[3,3]$, which satisfy the \emph{if} condition in line 4 of Algorithm~\ref{alg:zones}.$\hfill\square$
\end{example}

\begin{definition}[\cite{gao2023ifac}]\label{zoneautomaton}
Consider a TFA $G=(X, E, \Delta, \Gamma, X_0)$\footnote{The work \cite{gao2023ifac} assumes that timed automata are associated with a clock that is reset at each event occurrence; consequently, there is no clock resetting function to be considered.} with a single clock that is reset at each event occurrence.
The \emph{zone automaton} of $G$ is an NFA $ZA(G)=(V, E_\tau, \Delta_z, V_0)$, where
\begin{itemize}
  \item $V \subseteq X \times \bigcup\limits_{x\in X}Z(x)$ is the finite set of extended states,
  \item $E_\tau \subseteq E\cup \{\tau\}$ is the alphabet, where the event $\tau$ implies time elapsing from any clock value $\theta\in z$ to any $\theta^\prime\in succ(z)$ when $G$ stays at $x\in X$;
  \item $\Delta_z\subseteq V \times E_\tau \times V$ is the transition relation,
  \item $V_0=\{(x,[0,0])\mid x\in X_0\}\subseteq V$ is the set of initial extended states.$\hfill\square$
\end{itemize}
\end{definition}

The work in \cite{gao2023ifac}  corresponds to the particular case with
$Reset(\delta)=[0,0]$ for each transition $\delta\in\Delta$ and a maximal dwell time for each discrete state.
The zone automaton is a finite state automaton that provides a purely discrete event description of the behaviour of a given timed automaton.
Each state of a zone automaton is called an \emph{extended state}, which is a pair $(x,z)$ whose first element is a discrete state $x\in X$ and whose second element is a zone $z\in Z(x)$ specifying the range of the clock values.
The extended state evolves either because of the elapsed time $\tau$ or because of the occurrence of a discrete event.
In the former, a transition $((x,z),\tau,(x,succ(z)))\in \Delta_z$ corresponds to a time-driven evolution of $G$ from a clock value in $z$ to another clock value in $succ(z)\in Z(x)$.
In the latter, an event-driven evolution caused by an event $e\in E$ from $(x,z)$ leads to $(x^\prime, z^\prime)$ indicating that the occurrence of event $e$ yields $x^\prime$ with a clock value in $z^\prime$.

In this paper, the zone automaton associated with a TFA can be constructed by implementing the event-driven evolution according to both the timing function and the clock resetting function.
In detail, for each transition $(x,e,x^\prime)\in\Delta$, we first determine whether $z\subseteq\Gamma((x,e,x^\prime))$; if so, for each zone $z^\prime\in Z(x^\prime)$, a transition $((x,z), e, (x^\prime, z^\prime))$ is defined if (a) $Reset((x,e,x^\prime))\neq id$ and $z^\prime \subseteq Reset((x,e,x^\prime))$ (the clock is reset after $(x,e,x^\prime)$ occurs), or if (b) $Reset((x,e,x^\prime))= id$ and $z=z^\prime$ (the clock is not reset after $(x,e,x^\prime)$ occurs).
We further define the function $f_x: V \rightarrow X$ (resp., $f_z: V \rightarrow \bigcup\limits_{x\in X} Z(x)$) mapping an extended state in $V$ to a discrete state in $X$ (resp., a zone of the associated discrete state).

\begin{example}
Consider the TFA $G=(X$, $E$, $\Delta$, $\Gamma$, $Reset$, $X_0)$ in Fig.~\ref{tfaG}.
The zone automaton $G_z=(V, E_\tau, \Delta_z, V_0)$ is shown in Fig.~\ref{lambdaautomaton}.
For instance, transition $((x_0,[0,0]), \tau, (x_0,(0,1)))$ implies that the clock may evolve from $0$ to any value in $(0,1)$ if $G$ is at $x_0$.
Three transitions labeled with $b$ go from $(x_0,[0,0])$ to $(x_2,[0,0])$, from $(x_0,(0,1))$ to $(x_2,(0,1))$, and from $(x_0,[1,1])$ to $(x_2,[1,1])$, respectively. Each transition implies an event-driven evolution from $x_0$ to $x_2$ under the occurrence of a transition $(x_0,b,x_2)$.$\hfill \square$
\end{example}

\begin{figure}[h]
\centering
\includegraphics[width=10cm]{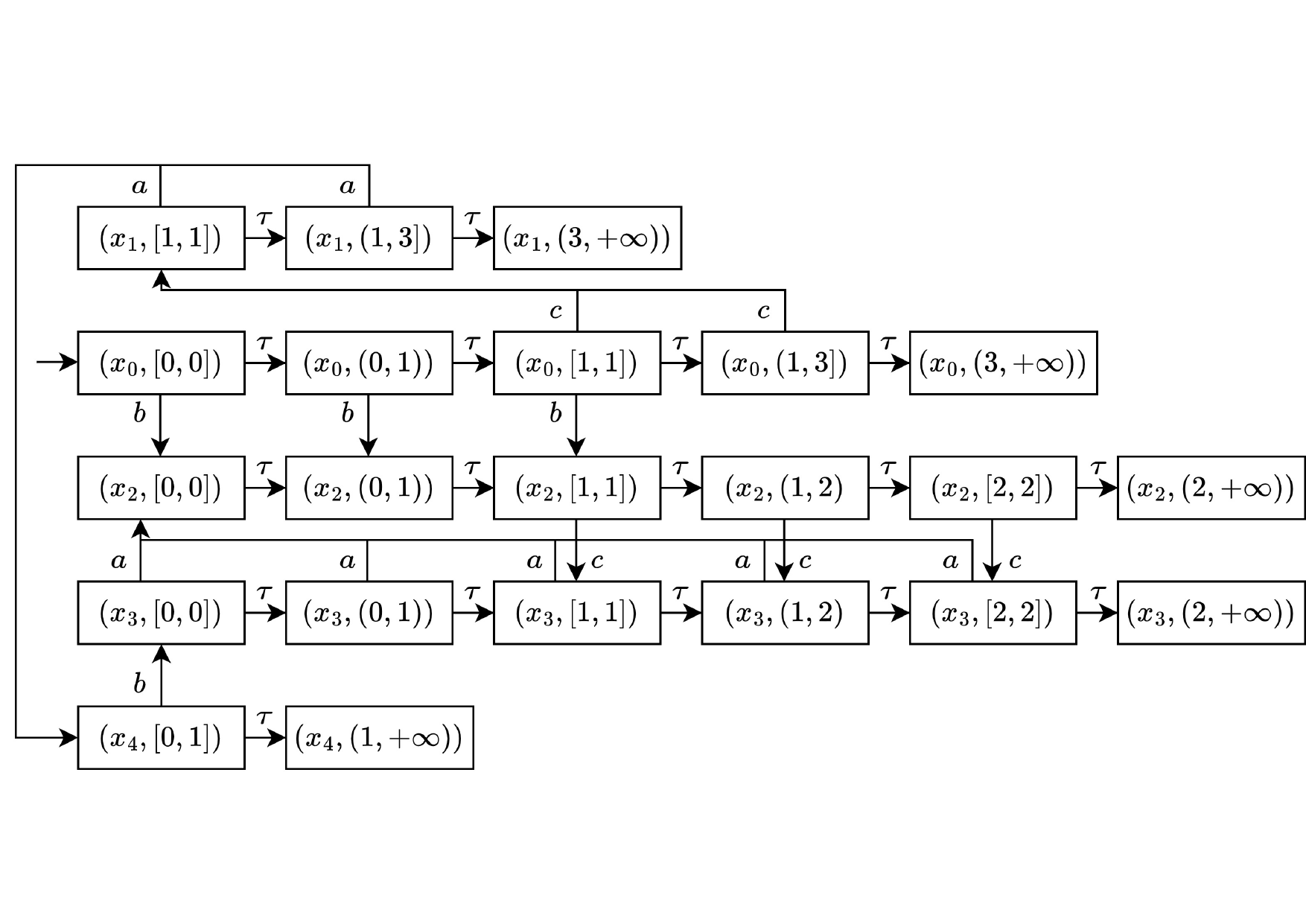}
\caption{Zone automaton $G_z$ of the TFA $G$ in Fig.~\ref{tfaG}.}\label{lambdaautomaton}
\end{figure}

\section{Dynamics of a zone automaton}

In this section, we explore the dynamics of a zone automaton $G_z=(V, E_\tau, \Delta_z$, $V_0)$ associated with a TFA $G=(X, E, \Delta, \Gamma, Reset, X_0)$ and discuss how the timed evolutions of $G$ are related to the evolutions of zone automaton $G_z$.

\begin{definition}
Given a zone automaton $G_z=(V$, $E_\tau$, $\Delta_z$, $V_0)$ and a discrete state $x$ of a TFA $G$, a \emph{$\tau$-run at $x$} of length $k\ (1 \leq k\leq n)$ is defined as a sequence of $k$ extended states $(x,z_{(i)})\in V\ (i= 1,\cdots,k)$, and event $\tau\in E_\tau$, represented as
$\rho_\tau(x): (x,z_{(1)})\stackrel{\tau}{\longrightarrow}\cdots \stackrel{\tau}{\longrightarrow}(x,z_{(k)})$
such that $((x,z_{(i)}),\tau,(x,z_{(i+1)}))\in \Delta_z$ holds for $i\in\{1,\cdots,k-1\}$.
The \emph{starting extended state} and the \emph{ending extended state} of $\rho_\tau(x)$ are denoted as $v_{st}(\rho_\tau(x))=(x,z_{(1)})$ and $v_{en}(\rho_\tau(x))=(x,z_{(k)})$, respectively.
The \emph{duration range} of $\rho_\tau(x)$ is the time distance of $z_{(1)}$ and $z_{(k)}$, denoted as $d(\rho_\tau(x))= D(z_{(1)},z_{(k)})$.$\hfill \square$
\end{definition}

\begin{definition}
Given a zone automaton $G_z=(V$, $E_\tau$, $\Delta_z$, $V_0)$ of a TFA $G$, a run of length $k\geq 0$ is a sequence of $k+1$ $\tau$-runs $\rho_{\tau}(x_{(i)})\ (i= 0,\cdots,k)$ at $x_{(i)}\in X$, and $k$ events $e_i\in E\ (i= 1,\cdots,k)$, represented as
$\bar{\rho}:\rho_{\tau}(x_{(0)}) \stackrel{e_{1}}{\longrightarrow}\rho_{\tau}(x_{(1)})\cdots \stackrel{e_{k}}{\longrightarrow} \rho_{\tau}(x_{(k)}),$
such that $(v_{en}(\rho_\tau(x_{(i-1)})),e_i, v_{st}(\rho_\tau(x_{(i)})))\in \Delta_z$ holds for $i\in\{1,\cdots,k\}$.
In addition, the \emph{starting extended state} and the \emph{ending extended state} of $\bar{\rho}$ are defined as $v_{st}(\bar{\rho})=v_{st}(\rho_{\tau}(x_{(0)}))$ and $v_{en}(\bar{\rho})=v_{en}(\rho_{\tau}(x_{(k)}))$, respectively.
The \emph{duration range} of $\bar{\rho}$ is defined as $d(\bar{\rho})=\bigoplus\limits_{i=1}^k d(\rho_\tau(x_i))$.
The logical word generated by $\bar{\rho}$ is denoted as $s(\bar{\rho})=e_1\cdots e_k$ via a function defined as $s:E_\tau^\ast \rightarrow E^\ast$. The set of runs generated by $G_z$ is defined as $\mathcal{R}_z(G_z)$.\hfill $\square$
\end{definition}


The dynamics of a zone automaton $G_z$ can be represented by its runs, each of which is a sequence of $\tau$-runs, which implies the time elapses discretely at discrete states, connected by an evolution caused by an event $e_i\in E$.
We now provide a sufficient and necessary condition for the reachability of a discrete state in a TFA and explain the correlation of the dynamics of a TFA and that of zone automaton.

\begin{theorem}
Given a TFA $G=(X, E, \Delta, \Gamma, Reset, X_0)$, zone automaton $G_z=(V, E_\tau, \Delta_z, V_0)$ and a time instant $T\in\mathbb{R}_{\geq 0}$, $x^\prime\in X$ is $T$-reachable from $x\in X$ if and only if there exists a run $\bar{\rho}$ in $G_z$ such that $T\in d(\bar{\rho})$, $v_{st}(\bar{\rho})=(x,z)$, and $v_{en}(\bar{\rho})=(x^\prime,z^\prime)$, where $z\in Z(x)$ and $z^\prime\in Z(x^\prime)$.
\end{theorem}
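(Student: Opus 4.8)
## Proof Strategy

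The plan is to establish the two directions of the biconditional separately, each by an induction on the number $k$ of observable/event-driven steps, showing a tight correspondence between the timed runs $\rho$ of $G$ and the runs $\bar\rho$ of $G_z$. The linking invariant I would carry through the induction is: a timed run $\rho$ of length $k$ with $x_{st}(\rho)=x$, $x_{en}(\rho)=x'$, clock assignments $\theta_{(0)},\dots,\theta_{(k)}$, and duration $T(\rho)=T$ exists \emph{if and only if} there is a run $\bar\rho$ in $G_z$ with the same event sequence $e_1\cdots e_k$, with $v_{st}(\bar\rho)=(x,z)$ where $\theta_{(0)}\in z$, with $v_{en}(\bar\rho)=(x',z')$ where $\theta_{(k)}\in z'$, and such that $T$ is realizable as a sum of the per-state elapsed times, i.e.\ $T\in d(\bar\rho)$. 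Since $z$ ranges over $Z(x)$ and $z'$ over $Z(x')$ in the statement, and since we may always start $G_z$ from the initial zone $[0,0]$ (and $\theta_{(0)}=0$ by the standing assumption that the clock starts at $0$), the freedom in choosing $z$ is consistent; the theorem as phrased just asserts existence of \emph{some} witnessing zones.

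For the base case $k=0$, $x'=x$ and $T$ is simply the dwell time at $x$; on the $G_z$ side $\bar\rho$ is a single $\tau$-run $\rho_\tau(x)\colon (x,z_{(1)})\to\cdots\to(x,z_{(k')})$ whose duration range is $D(z_{(1)},z_{(k')})$, so I need the fact that the zones in $Z(x)$, together with their $\tau$-successor structure, exactly tile $[0,+\infty)$ — this is what Algorithm~\ref{alg:zones} guarantees (the last zone being $(M_x,+\infty)$, so arbitrarily large dwell times are covered). For the inductive step, consider a timed run of length $k$ and split off its last event-driven transition $(x_{(k-1)},e_k,x_{(k)})$. On the forward direction, I apply the induction hypothesis to the prefix of length $k-1$ to get a run $\bar\rho'$ in $G_z$ ending at $(x_{(k-1)},\bar z)$ with $\theta_{(k-1)}\in\bar z$; then I let time elapse via $\tau$-transitions from $\bar z$ to the zone $\tilde z$ containing $\theta_{(k-1)}+t_k-t_{k-1}$, use the run condition $\theta_{(k-1)}+t_k-t_{k-1}\in\Gamma(x_{(k-1)},e_k,x_{(k)})$ together with the construction of $\Delta_z$ (which fires $e_k$ exactly when the current zone is $\subseteq\Gamma(\cdot)$) to take the $e_k$-edge, and land in a zone $z_{(k)}$ of $Z(x_{(k)})$ containing $\theta_{(k)}$ — the latter being guaranteed by the two cases of the run definition ($\theta_{(k)}\in Reset(\delta)$ when $Reset(\delta)\neq id$, matched by clause (a) of the $\Delta_z$ construction; $\theta_{(k)}=\theta_{(k-1)}+t_k-t_{k-1}$ and the zone is preserved when $Reset(\delta)=id$, matched by clause (b)). The converse direction unfolds a run $\bar\rho$ symmetrically: each $\tau$-run gives an interval of admissible dwell times at that state, each $e_i$-edge certifies that $\Gamma$ and $Reset$ are respected for some choice of clock value in the source/target zones, and one picks concrete time instants $t_1<t_2<\cdots$ and clock assignments $\theta_{(i)}$ realizing any target $T\in d(\bar\rho)$.

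The main obstacle, and the step deserving the most care, is the quantifier bookkeeping between the \emph{set}-valued duration range $d(\bar\rho)=\bigoplus_{i} D(z_{(1)}^{(i)},z_{(k_i)}^{(i)})$ and the \emph{single} real number $T$: I must show that every $T$ in this Minkowski sum of interval-distances is simultaneously realizable by a coherent choice of all the $\theta_{(i)}$ and $t_i$ along one timed run, and conversely that the $T$ arising from any timed run lies in the sum. This requires observing that $D(z,z')$ for zones with $z'$ a $\tau$-successor-descendant of $z$ is exactly the interval of differences $\{\,\theta'-\theta : \theta\in z,\ \theta'\in z',\ \theta'\ge\theta\,\}$, that these per-state contributions are \emph{independent} across distinct discrete states (the clock value entering state $x_{(i)}$ is fixed by $Reset$ or by the incoming zone, but the amount of time spent there is free within the zone range), and hence the achievable total durations form precisely the $\bigoplus$ of the pieces. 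A secondary subtlety is the treatment of the $Reset(\delta)=id$ transitions, where a zone is \emph{not} reset and the clock continues to accumulate: here I must check that Algorithm~\ref{alg:zones} keeps such regions unmerged so that the zone entered after an $id$-transition genuinely contains the continued clock value $\theta_{(k-1)}+t_k-t_{k-1}$, which it does by construction (clause (b) forces $z=z'$, and the region granularity is fine enough because $m_x,M_x$ in Definition~\ref{def:regions} account for $\Gamma(\delta)$ of such incoming $id$-transitions). Once these are in place, the induction closes routinely.
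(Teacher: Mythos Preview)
Your proposal takes essentially the same approach as the paper: both arguments establish the correspondence between timed runs of $G$ and runs of $G_z$ by induction on the run length, splitting off the last event-driven transition and invoking the construction of $\Delta_z$ (clauses (a) and (b)) to match the $\Gamma$/$Reset$ conditions. The paper handles the ``if'' direction by a direct construction rather than induction, but the substance is the same; you are simply more explicit than the paper about the Minkowski-sum bookkeeping for $d(\bar\rho)$ and the $Reset=id$ zone-preservation subtlety, both of which the paper's proof passes over without comment.
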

\begin{proof}
\textit{(if)} Let $x = x_{(0)}$, $x^\prime = x_{(k)}$, $t_0$, $t_1, \cdots, t_k\in \mathbb{R}_{\geq 0}$ and $\bar{\rho}: \rho_\tau(x_{(0)}){\xrightarrow{e_1}}\cdots {\xrightarrow{e_k}}$ $\rho_\tau(x_{(k)})$ be a run such that $e_i\in E$, $t_{i}-t_{i-1}\in d(\rho_\tau(x_{(i-1)}))$ for $i\in \{1,\cdots, k\}$, $t-t_k \in d(\rho_\tau(x_{(k)}))$, and $T=t-t_0\in d(\bar{\rho})$.
It can be inferred that $(v_{en}(\rho_\tau(x_{(i-1)})),e_i,v_{st}(\rho_\tau(x_{(i)})))\in \Delta_z$ holds for $i\in \{1,\cdots, k\}$.
Accordingly, for $i\in \{1,\cdots, k\}$, there exist $\theta_{(i-1)}+t_i-t_{i-1}\in f_z(v_{en}(\rho_\tau(x_{(i-1)})))$ and
$\theta_{(i)}\in f_z(v_{st}(\rho_\tau(x_{(i)})))$ such that $(x_{(i-1)},e_i,x_{(i)})\in O(x_{(i-1)},f_z(v_{st}(\rho_\tau(x_{(i-1)}))))$ and $(x_{(i-1)},e_i,x_{(i)})\in I(x_{(i)},f_z(v_{st}(\rho_\tau(x_{(i)}))))$.
It is obvious that there exists a timed evolution $(\sigma(\rho),t)$, where the timed run $\rho$:$(x_{(0)},\theta_{(0)}){\xrightarrow{(e_1, t_1)}}$ $\cdots$ $ {\xrightarrow{(e_{k}, t_k)}}(x_{(k)}$, $\theta_{(k)})$ satisfies the condition $T(\rho)= t_k-t_0$.
Thus, $x^\prime$ is $T$-reachable from $x$.

\textit{(only if)} Let $(\sigma(\rho),t)\in (E\times \mathbb{R}_{\geq 0})^\ast \times \mathbb{R}_{\geq 0}$ be a timed evolution of $G$ from $t_0\in\mathbb{R}_{\geq 0}$ to $t\in\mathbb{R}_{\geq 0}$ such that $x_{st}(\rho)=x$ and $x_{en}(\rho)= x^\prime$.
The proof is made by induction on the length $k$ of the timed run $\rho$ generated by $G$ from $t_0$ to $t$.
The base case is for the timed run $\rho$ of length 0 that involves only the discrete state $x$ and no transition in $G$. It is $\sigma(\rho)=\lambda$ and $x_{st}(\rho)=x_{en}(\rho)=x=x^\prime$.
There exists a run $\bar{\rho}:(x,\bar{z})\stackrel{\tau}{\longrightarrow}\cdots\stackrel{\tau}{\longrightarrow}(x,z)$ in $G_z$, where $T=t-t_0\in d(\bar{\rho})$.
Thus the base case holds.
By denoting $x = x_{(0)}$ and $x^\prime = x_{(k)}$, the induction hypothesis is that the existence of a timed evolution $(\sigma(\rho),t)$ generating from $t_0$, where $\rho: (x_{(0)},\theta_{(0)}){\xrightarrow{(e_1,t_1)}}\cdots {\xrightarrow{(e_k,t_k)}}(x_{(k)},\theta_{(k)})$ of length $k\geq 1$ with $x_{(i)}\in X$ and $e_i\in E$ for all $i\in \{1,\cdots, k\}$, implies the existence of a run $\bar{\rho}:\rho_\tau(x_{(0)}){\xrightarrow{e_1}}\cdots {\xrightarrow{e_k}}\rho_\tau(x_{(k)})$ in $G_z$ such that $t- t_0\in d(\bar{\rho})$, $t-t_k\in d(\rho_\tau(x_{(k)}))$, $t_i-t_{i-1}\in d(\rho_\tau(x_{(i-1)}))$ for $i \in\{1,\cdots,k\}$, $v_{st}(\bar{\rho})=(x_{(0)},z)$ and $v_{en}(\bar{\rho})=(x_{(k)},\bar{z})$, where $z\in Z(x_{(0)})$, and $\bar{z}\in Z(x_{(k)})$.
We now prove that the same implication holds for a timed evolution $(\sigma(\rho^\prime),t^\prime)$, where $\rho^\prime$: $\rho{\xrightarrow{(e_{k+1},t)}}(x_{(k+1)},\theta_{(k+1)})$.
According to $\rho^\prime$, it is $\theta_{(k+1)}\in Reset((x_{(k)}$, $e_{k+1}$, $x_{(k+1)}))$ if $Reset((x_{(k)},e_{k+1},x_{(k+1)}))\neq id$, or $\theta_{(k+1)}\in \Gamma((x_{(k)},e_{k+1},x_{(k+1)}))$ if $Reset((x_{(k)},e_{k+1},x_{(k+1)}))= id$.
In addition, $\theta_{(k)}+t-t_k\in \Gamma((x_{(k)},e_{k+1},x_{(k+1)}))$.
It implies that 
there exists a run $\bar{\rho}^\prime:\bar{\rho}{\xrightarrow{e_{k+1}}}\rho_\tau(x_{({k+1})})$ in $G_z$ such that $\theta_{(k+1)}\in f_z(v_{st}(\rho_\tau(x_{({k+1})})))$ and $\theta_{(k+1)}+t^\prime-t\in f_z(v_{en}(\rho_\tau(x_{({k+1})})))$.
Therefore, it is $t^\prime-t_0\in d(\bar{\rho}^\prime)$ according to $t^\prime-t\in d(\rho_\tau(x_{({k+1})})))$ and $t-t_0\in d(\bar{\rho})$.
\end{proof}

According to Theorem 1, if $x^\prime$ is $T$-reachable from $x$, then in the zone automaton $G_z$ there exists a run that originates from $(x, z)$ and reaches $(x^\prime,z^\prime)$, where $z\in Z(x)$ and $z^\prime\in Z(x^\prime)$.
In addition, $T$ belongs to the duration range of that run.
In turn, given a run $\bar{\rho}$ in $G_z$ such that $T\in d(\bar{\rho})$, it can be concluded that the discrete state associated with $v_{en}(\bar{\rho})$ is $T$-reachable from the discrete state associated with $v_{st}(\bar{\rho})$.
In simple words, the $T$-reachability of $x^\prime$ from $x$ can be analyzed by exploring an appropriate run in the zone automaton.

\begin{example}
Consider the TFA $G$ in Fig.~\ref{tfaG} and zone automaton $G_z$ in Fig.~\ref{lambdaautomaton}.
A timed evolution $((c,1.5)(a,3),4)$ of $G$ from $0$ to $4$ implies that $x_4$ is $4$-reachable from $x_0$, which can be concluded from a run in $G_z$ as $\bar{\rho}:\rho_{\tau}(x_0)\stackrel{c}{\longrightarrow}\rho_{\tau}(x_1)\stackrel{a}{\longrightarrow}\rho_{\tau}(x_4)$, where $\rho_{\tau}(x_0):(x_0,[0,0]))\stackrel{\tau}{\longrightarrow}(x_0,(0,1))\stackrel{\tau}{\longrightarrow}$$(x_0,[1, 1])$$\stackrel{\tau}{\longrightarrow}(x_0,(1,3])$, $\rho_{\tau}(x_1): (x_1,[1,1])\stackrel{\tau}{\longrightarrow}(x_1,(1,3])$ and $\rho_{\tau}(x_4): (x_4,[0,1]).$
\hfill$\square$
\end{example}

\section{State estimation of TFA}

Given a partially observed TFA $G=(X$, $E$, $\Delta$, $\Gamma$, $Reset$, $X_0)$ with a partition of alphabet $E=E_o\dot\cup E_{uo}$, in this section we develop an approach for state estimation based on the zone automaton $G_z$, given a timed observation $(\sigma_o,t)\in (E_o \times \mathbb{R}_{\geq 0})^\ast \times \mathbb{R}_{\geq 0}$.

We partition this section into two subsections.
In the first subsection, we consider the case where $G$ produces no observation, which is an intermediate step towards the solution of the state estimation problem under partial observation.
In the second subsection, we take into account the information coming from the observation of new events at certain time instants, and prove that the discrete states consistent with a timed observation $(\sigma_o,t)$ and the range of clock value associated with each estimated discrete state can be inferred following a finite number of runs in the zone automaton $G_z$.
We make the following assumption.
\begin{assumption}[RO: Reinitialized observations]
The clock is reset upon the occurrence of any observable event, i.e.,
$$(x,e,x') \in \Delta, \ e\in E_o \quad \Longrightarrow \quad Reset(x,e,x') \neq id.$$ $\hfill\square$
\end{assumption}

This assumption is necessary to ensure that the defined zone automaton contains all relevant information to estimate the discrete state.
Consider a scenario where an observable event occurs without resetting the clock. In such a case, for future estimations one may need to keep track of this exact value adding new extended states to the zone automaton: thus, the state space of the zone automaton could grow indefinitely as new events are observed.

\subsection{State estimation under no observation}

\begin{definition}
Given a TFA $G=(X$, $E$, $\Delta$, $\Gamma$, $Reset$, $X_0)$ with set of unobservable events $E_{uo}$ and zone automaton $G_z=(V, E_\tau, \Delta_z, V_0)$, the following set of extended states
$V_\lambda(x, z, t)= \{v_{en}(\bar{\rho})\in V \mid   (\exists \bar{\rho}\in \mathcal{R}_z (G_z)) t \in d(\bar{\rho}),  v_{st}(\bar{\rho})=(x,z),  s(\bar{\rho})\in E_{uo}^\ast \}$
is said to be \emph{$\lambda$-estimation from $(x,z)\in V$ within $t\in\mathbb{R}_{\geq 0}$}.
\hfill $\square$
\end{definition}

Given a zone automaton, the $\lambda$-estimation from $(x,z)\in V$ within $t\in\mathbb{R}_{\geq 0}$ is the set of extended states of $G_z$ that can be reached following a run of duration $t$, originating at $(x, z)$ and producing no observation.
If $(x^\prime,z^\prime)\in V_\lambda(x, z, t)$, it basically reveals that $x^\prime$ is unobservably $T$-reachable from $x$ with a clock value $\theta\in z$ according to Theorem 1.
In addition, the zone $z^\prime$ associated with $x^\prime$ specifies the range of the value of the clock.
The $\lambda$-estimation from $(x,z)$ within $t$ can be obtained by enumerating the runs starting from $(x,z)$ with a duration associated with $t$.
By denoting the maximum number of zones for $x\in X$ as $Q_x$, the complexity for computing the $\lambda$-estimation is $\mathcal{O}(q^3|X|)$ with $q=\max\limits_{x\in X} |Q_x|$.


\begin{proposition}
Given a TFA with a set of discrete states $X$, the set of unobservable events $E_{uo}$, and zone automaton $G_z=(V, E_\tau, \Delta_z, V_0)$,
$x^\prime\in X$ is unobservably $T$-reachable from $x\in X$, where $T\in \mathbb{R}_{\geq 0}$, if and only if there exist $z\in Z(x)$ and $z^\prime\in Z(x^\prime)$ such that $(x^\prime,z^\prime)\in V_\lambda(x,z,T)$.
\end{proposition}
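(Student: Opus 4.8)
The plan is to reduce Proposition 1 to Theorem 1 together with the definition of $\lambda$-estimation, so that essentially no new machinery is needed. First I would unwind the definitions on both sides of the claimed equivalence. On the left, "$x'$ is unobservably $T$-reachable from $x$" means (Definition of unobservable $T$-reachability) that there is a timed evolution $(\sigma(\rho),t)$ of $G$ with $t - t_{st}(\rho) = T$, $x_{st}(\rho) = x$, $x_{en}(\rho) = x'$, and $S(\sigma(\rho)) \in E_{uo}^\ast$. On the right, "$(x',z') \in V_\lambda(x,z,T)$ for some $z \in Z(x)$, $z' \in Z(x')$" means there is a run $\bar\rho \in \mathcal{R}_z(G_z)$ with $T \in d(\bar\rho)$, $v_{st}(\bar\rho) = (x,z)$, $v_{en}(\bar\rho) = (x',z')$, and $s(\bar\rho) \in E_{uo}^\ast$.

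For the "only if" direction, I would start from the timed evolution $(\sigma(\rho),t)$ witnessing unobservable $T$-reachability and apply the "only if" part of Theorem 1: since $x'$ is $T$-reachable from $x$, there is a run $\bar\rho$ in $G_z$ with $T \in d(\bar\rho)$, $v_{st}(\bar\rho) = (x,z)$, $v_{en}(\bar\rho) = (x',z')$ for suitable zones $z,z'$. The one extra thing to check is that this run can be taken to have $s(\bar\rho) \in E_{uo}^\ast$. Here I would inspect the construction in the proof of Theorem 1: the events labelling the event-driven transitions of $\bar\rho$ are exactly the events $e_1,\dots,e_k$ of the timed run $\rho$, and by hypothesis $S(\sigma(\rho)) = e_1\cdots e_k \in E_{uo}^\ast$; hence $s(\bar\rho) = e_1\cdots e_k \in E_{uo}^\ast$, and the definition of $V_\lambda$ is satisfied. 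For the "if" direction I would run the same argument in reverse: from $\bar\rho$ with $s(\bar\rho)\in E_{uo}^\ast$ and $T \in d(\bar\rho)$, the "if" part of Theorem 1 yields a timed evolution $(\sigma(\rho),t)$ with $t - t_{st}(\rho) = T$, $x_{st}(\rho) = x$, $x_{en}(\rho) = x'$, and again the events of $\rho$ coincide with the labels of $\bar\rho$, so $S(\sigma(\rho)) = s(\bar\rho) \in E_{uo}^\ast$, giving unobservable $T$-reachability.

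The main obstacle, such as it is, is bookkeeping rather than mathematics: Theorem 1 as stated speaks only about $T$-reachability and does not explicitly track the generated logical word, so I need to make sure that the correspondence between the labels $e_1,\dots,e_k$ of $\bar\rho$ and the events of the timed run $\rho$ is genuinely an identity of strings in both directions of the construction — in particular that no $\tau$-steps leak into $s(\bar\rho)$ (they do not, since $s: E_\tau^\ast \to E^\ast$ erases $\tau$) and that unobservability is preserved verbatim. I would therefore phrase the proof as "by Theorem 1, augmented with the observation that the witnessing run and timed evolution in that proof carry the same sequence of events", and then simply note that $E_{uo}^\ast$-membership transfers. A clean way to present this is to state explicitly, as a one-line remark extracted from the proof of Theorem 1, that $S(\sigma(\rho)) = s(\bar\rho)$ for the $\rho$ and $\bar\rho$ constructed there, and then the proposition is immediate.
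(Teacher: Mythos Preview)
Your proposal is correct and follows essentially the same route as the paper: both directions are reduced to Theorem~1 together with the definition of $V_\lambda$. In fact your write-up is more careful than the paper's, which simply invokes Theorem~1 and asserts unobservable $T$-reachability without explicitly noting, as you do, that the construction in Theorem~1 preserves the event sequence so that $S(\sigma(\rho)) = s(\bar\rho)$ and hence membership in $E_{uo}^\ast$ transfers.
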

\begin{proof}
\textit{(if)} Suppose that there exist $z\in Z(x)$ and $z^\prime\in Z(x^\prime)$ such that $(x^\prime,z^\prime)\in V_\lambda(x,z,T)$.
There exists a run $\bar{\rho}$ in $G_z$ such that $v_{st}(\bar{\rho})=(x,z)$, $v_{en}(\bar{\rho})=(x^\prime,z^\prime)$, $s(\bar{\rho})\in E_{uo}^\ast$ and $T\in d(\bar{\rho})$.
According to Theorem 1, $x^\prime$ is unobservably $T$-reachable from $x$.

\textit{(only if)} Let $x^\prime$ be unobservably $T$-reachable from $x$. Then, there exists a timed evolution $(\sigma(\rho),t)$ from $t_0$ such that $x_{st}(\rho)=x$, $x_{en}(\rho)=x^\prime$, $T=t-t_0$, and $s(\bar{\rho})\in E_{uo}^\ast$.
Accordingly, there exists a run $\bar{\rho}$ in $G_z$ such that $v_{st}(\bar{\rho})=(x,z)$, $v_{en}(\bar{\rho})=(x^\prime,z^\prime)$, $s(\bar{\rho})\in E_{uo}^\ast$ and $T \in d(\bar{\rho})$, where $z\in Z(x)$ and $z^\prime\in Z(x^\prime)$.
Thus, $(x^\prime,z^\prime)\in V_\lambda(x,z,T)$.
\end{proof}

\begin{table*}[t]
\caption{State estimation of the TFA $G$ in Fig. 1 under no observation for $t\in[0,2]$.}
\begin{center}\scalebox{0.75}{
\begin{tabular}{c|c|c|c}
\toprule[1pt]
$k$ & Time interval $I_k$ &  $\lambda$-estimation $V_\lambda(x_0,[0,0],t)$, where $t\in I_k$ & \tabincell{c}{ $\mathcal{X}(\lambda, t), t\in I_k$ }\\
\hline
$0$ & [0,0] & $\{(x_0,[0,0]), (x_2,[0,0])\}$ & $\{x_0, x_2\}$ \\
$1$ & (0,1) & $\{(x_0,(0,1)), (x_2,(0,1))\}$& $\{x_0, x_2\}$ \\
$2$ & [1,1] & $\{(x_0,[1,1]),(x_1,[1,1]),(x_2,[1,1]),(x_3,[1,1])\}$& $\{x_0, x_1, x_2, x_3\}$ \\
$3$ & (1,2) & $\{(x_0,(1,3]),(x_1,[1,1]),(x_1,(1,3]),(x_2,(1,2)),(x_3,(1,2))\}$& $\{x_0, x_1, x_2, x_3\}$ \\
$4$ & [2,2] & $\{(x_0,(1,3]),(x_1,[1,1]),(x_1,(1,3]),(x_2,[2,2]),(x_3,[2,2])\}$& $\{x_0, x_1, x_2, x_3\}$ \\
\bottomrule[1pt]
\end{tabular}}
\end{center}\label{lambdaestimation}
\end{table*}
\begin{example}
Consider the TFA $G$ in Fig.~\ref{tfaG}, where $E_o=\{a\}$, $E_{uo}=\{b,c\}$, and zone automaton $G_z$ in Fig.~\ref{lambdaautomaton}.
Given the following runs in $G_z$ of duration 1 starting from $(x_0,[0,0])$ and involving no observable events: (1) $\bar{\rho}_1: \rho_{\tau}(x_0)$; (2) $\bar{\rho}_2: \rho_{\tau}(x_0)\stackrel{c}{\longrightarrow}\rho_{\tau}(x_1)$; (3) $\bar{\rho}_3: \rho_{\tau}(x_0)\stackrel{b}{\longrightarrow}\rho_{\tau}(x_2)$; and (4) $\bar{\rho}_4: \rho_{\tau}(x_0)\stackrel{b}{\longrightarrow}\rho_{\tau}(x_2)\stackrel{c}{\longrightarrow}\rho_{\tau}(x_3)$,
where $\rho_{\tau}(x_0):(x_0,[0,0])\stackrel{\tau}{\longrightarrow}(x_0,(0,1))\stackrel{\tau}{\longrightarrow}(x_0,[1,1])$, $\rho_{\tau}(x_1): (x_1,[1,1])$, $\rho_{\tau}(x_2): (x_2,[1,1])$, and $\rho_{\tau}(x_3): (x_3,[1,1])$, it can be inferred that $V_\lambda(x_0,[0,0], 1)= \{(x_0,[1,1]), (x_1,[1,1]), (x_2,[1,1]), (x_3,[1,1])\}$.
Table \ref{lambdaestimation} summarizes the $\lambda$-estimation and the set of discrete states consistent with $(\lambda,t)$, where $t$ belongs to a region of $[0,2]$.$\hfill\square$
\end{example}

\subsection{State estimation under partial observation}
In this subsection we focus on the most general state estimation problem when a timed observation is received as a pair of a non-empty timed word and a time instant.
We first propose a general result that characterizes the set of discrete states of a TFA consistent with a given timed observation, by means of the extended states reachable in zone automaton.

\begin{theorem}\label{observationconsistent}
Consider a TFA $G=(X$, $E$, $\Delta$, $\Gamma$, $Reset$, $X_0)$ with set of observable events $E_o$.
Given a timed observation $(\sigma_o,t)\in (E_o \times \mathbb{R}_{\geq 0})^\ast \times \mathbb{R}_{\geq 0}$, it is $x\in \mathcal{X}(\sigma_o,t)$ if and only if there exists a run $\bar{\rho}$ in the zone automaton $G_z$ such that $f_x(v_{st}(\bar{\rho}))\in X_0$, $f_x(v_{en}(\bar{\rho}))=x$, $t\in d(\bar{\rho})$ and $P_l(s(\bar{\rho}))=S(\sigma_o)$, where $P_l: E^\ast \rightarrow E_o^\ast$.
\end{theorem}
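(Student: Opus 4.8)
The statement is essentially a "lift" of Theorem~1 from bare $T$-reachability to reachability constrained by a partial observation. The natural plan is therefore to reduce Theorem~\ref{observationconsistent} to Theorem~1 by carefully tracking the observable projection along the correspondence between timed runs of $G$ and runs of $G_z$. Concretely, I would prove both directions by unpacking the definitions $\mathcal{S}(\sigma_o,t)$ and $\mathcal{X}(\sigma_o,t)$ and then invoking the (already established) equivalence between timed evolutions of $G$ and runs $\bar\rho$ of $G_z$. The key bookkeeping fact to isolate first is a \emph{projection-compatibility lemma}: if a timed run $\rho$ of $G$ corresponds (via the construction in the proof of Theorem~1) to a run $\bar\rho$ of $G_z$, then $S(\sigma(\rho)) = s(\bar\rho)$ as words in $E^\ast$, and consequently $P(\sigma(\rho)) = \sigma_o$ if and only if $P_l(s(\bar\rho)) = S(\sigma_o)$. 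This holds because both the timed word $\sigma(\rho)$ and the $G_z$-run $\bar\rho$ carry exactly the same sequence of $E$-events $e_1\cdots e_k$ — the $\tau$ transitions in $\bar\rho$ are silent and are stripped by $s$, while the time-stamps in $\sigma(\rho)$ are stripped by $S$ — so $P$ (which erases $E_{uo}$-labelled timed events) and $P_l$ (which erases $E_{uo}$-labelled logical events) act identically on the underlying $E$-string.

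\textbf{The two directions.} For the \emph{only if} direction, suppose $x \in \mathcal{X}(\sigma_o,t)$. By definition there is a timed evolution $(\sigma(\rho),t) \in \mathcal{S}(\sigma_o,t)$ with $x_{en}(\rho) = x$, $x_{st}(\rho) \in X_0$, $t_{st}(\rho)=0$, $t_{en}(\rho) \le t$, and $P(\sigma(\rho)) = \sigma_o$. Apply the "only if" part of Theorem~1 (with $T = t$, since $t - t_{st}(\rho) = t$): this yields a run $\bar\rho$ in $G_z$ with $v_{st}(\bar\rho) = (x_{st}(\rho),[0,0])$, hence $f_x(v_{st}(\bar\rho)) = x_{st}(\rho) \in X_0$; $v_{en}(\bar\rho) = (x,z')$, hence $f_x(v_{en}(\bar\rho)) = x$; and $t \in d(\bar\rho)$. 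The projection-compatibility lemma gives $s(\bar\rho) = S(\sigma(\rho))$, so $P_l(s(\bar\rho)) = P_l(S(\sigma(\rho))) = S(P(\sigma(\rho))) = S(\sigma_o)$, where the middle equality is again just the observation that erasing unobservable events commutes with forgetting time-stamps. For the \emph{if} direction, given such a $\bar\rho$, apply the "if" part of Theorem~1 to obtain a timed evolution $(\sigma(\rho),t)$ of $G$ with $x_{st}(\rho) = f_x(v_{st}(\bar\rho)) \in X_0$, $t_{st}(\rho)=0$, $x_{en}(\rho) = f_x(v_{en}(\bar\rho)) = x$, and duration $t$. The projection-compatibility lemma again yields $P(\sigma(\rho)) = \sigma_o$ (here one needs $S$ to be injective-enough on the relevant observable words, or more precisely that $S(\sigma_o')=S(\sigma_o)$ together with matching time information forces $\sigma_o'=\sigma_o$ — this is where Assumption RO is used, since resetting the clock on each observable event means the zone reached after each observable transition determines the admissible occurrence times, so the timed observation $P(\sigma(\rho))$ is pinned down). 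Hence $(\sigma(\rho),t) \in \mathcal{S}(\sigma_o,t)$ and $x \in \mathcal{X}(\sigma_o,t)$.

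\textbf{Anticipated main obstacle.} The routine part is invoking Theorem~1; the delicate part is the "if" direction, where one recovers a timed \emph{observation} (with time-stamps), not just a logical word. Theorem~1 only guarantees a timed run with the right discrete endpoints, duration, and event sequence — it does not a priori guarantee that the time-stamps of the \emph{observable} events reproduce those recorded in $\sigma_o$. I expect the crux of the argument to be showing that, under Assumption RO, the run $\bar\rho$ in $G_z$ together with the constraint $t \in d(\bar\rho)$ does determine a valid choice of occurrence times for the observable events consistent with $\sigma_o$: because the clock is reset on every observable transition, the segment of $\bar\rho$ between two consecutive observable events lives entirely in the zones of intermediate discrete states starting from a reset value, and the duration ranges $d(\cdot)$ of these sub-runs compose additively (via $\bigoplus$) to exactly the set of realizable elapsed times between consecutive observations. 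So one must carefully decompose $\bar\rho = \rho_\tau(x_{(0)}) \xrightarrow{e_1}\cdots\xrightarrow{e_k}\rho_\tau(x_{(k)})$ into maximal unobservable blocks separated by observable events, match each block against the corresponding inter-observation time gap dictated by $\sigma_o$, and verify feasibility gap-by-gap. This per-block argument, together with the projection-compatibility lemma, is what makes the theorem more than a trivial restatement of Theorem~1, and I would expect to spend most of the proof writing it out.
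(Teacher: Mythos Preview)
Your \emph{only if} direction is correct and essentially matches the paper: both unpack the definition of $\mathcal{X}(\sigma_o,t)$, invoke Theorem~1, and use the bookkeeping fact $S(\sigma(\rho))=s(\bar\rho)$ to transfer the observable projection. You also correctly identify the genuine difficulty in the \emph{if} direction: the hypotheses $t\in d(\bar\rho)$ and $P_l(s(\bar\rho))=S(\sigma_o)$ constrain only the \emph{total} duration and the \emph{logical} observable word, not the individual time-stamps recorded in $\sigma_o$; so when a timed run $\rho$ is reconstructed from $\bar\rho$ via Theorem~1, nothing forces its observable time-stamps to coincide with those of $\sigma_o$.

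Where your plan breaks down is the proposed fix. Assumption~RO does not rescue the argument, and neither does ``verifying feasibility gap-by-gap'': the theorem's hypothesis does not assert that the duration range of each maximal unobservable block of $\bar\rho$ contains the corresponding inter-observation gap of $\sigma_o$ --- only that the \emph{sum} of those ranges contains $t$. Concretely, take $X_0=\{x_0\}$, all events observable with $Reset=[0,0]$ (so RO holds trivially), and transitions $(x_0,a,x_1)$ with $\Gamma=[1,1]$, $(x_1,a,x_2)$ with $\Gamma=[2,2]$, $(x_0,a,y_1)$ with $\Gamma=[2,2]$, $(y_1,a,y_2)$ with $\Gamma=[1,1]$. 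For $\sigma_o=(a,1)(a,3)$ and $t=3$ one has $\mathcal{X}(\sigma_o,3)=\{x_2\}$, yet the $G_z$-run through $y_1$ to $y_2$ satisfies $s(\bar\rho)=aa$ and $d(\bar\rho)=[2,2]\oplus[1,1]\oplus[0,0]=[3,3]\ni 3$, so the right-hand side holds for $y_2$. Your per-block check would reject this run (the first block has range $[2,2]\not\ni 1$), but that check is not in the hypothesis, so you cannot appeal to it. For comparison, the paper's own proof does not engage with this issue at all: it reconstructs $\rho$ from $\bar\rho$ and then simply asserts ``According to $P_l(s(\bar\rho))=S(\sigma_o)$, we have $(\sigma(\rho),t)\in\mathcal{S}(\sigma_o,t)$'' --- precisely the step you flagged as delicate, left unjustified.
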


\begin{proof}
\textit{(if)}
In the case that no observation is contained in $\sigma_o$, let $\bar{\rho}: \rho_\tau(x_{0})$ be a run in $G_z$, where $x_0\in X_0$. We have $f_x(v_{en}(\bar{\rho}))=x_0$, $t\in d(\bar{\rho})$ and $\sigma_o=\lambda$.
In this case there exists only one timed evolution $(\lambda,t)\in \mathcal{E}(G,t)$ such that $x_0\in \mathcal{X}(\lambda,t)$.
In the case that there exist one or more event occurrences, let us suppose that $x_{(0)}\in X_0$, $x_{(k)}=x$, and a run in $G_z$ as $\bar{\rho}: \rho_\tau(x_{(0)}){\xrightarrow{e_{1}}}\cdots {\xrightarrow{e_{k}}}\rho_\tau(x_{(k)})$ such that $t\in d(\bar{\rho})$, $x_{(i)}\in X$ and $e_{i}\in E$ for $i\in\{0,\cdots,k\}$.
It can be inferred that there exists a timed evolution $(\sigma(\rho),t)$ such that $x_{(k)}$ is $t$-reachable from $x_{(0)}$, where the timed run $\rho:(x_{(0)},\theta_{(0)}){\xrightarrow{(e_1,t_1)}}\cdots {\xrightarrow{(e_{k},t_k)}}(x_{(k)},\theta_{(k)})$ satisfies for $i\in\{1,\cdots, k\}$ that $t-t_k\in d(\rho_\tau(x_{(k)}))$, $t_i-t_{i-1}\in d(\rho_\tau(x_{(i-1)}))$, $\theta_{(i-1)}+t_i-t_{i-1}\in \Gamma((x_{(i-1)},e_{i},x_{(i)}))$ , $\theta_{(i)}\in Reset((x_{(i-1)},e_{i},x_{(i)}))$ if $Reset((x_{(i-1)},e_{i},x_{(i)}))\neq id$, and $\theta_{(i)}\in \Gamma((x_{(i-1)},e_{i},x_{(i)}))$  if $Reset((x_{(i-1)},e_{i},x_{(i)}))= id$.
According to $P_l(\bar{s}(\bar{\rho}))=S(\sigma_o)$, we have $(\sigma(\rho),t)\in \mathcal{S}(\sigma_o,t)$. Obviously, $x\in \mathcal{X}(\sigma_o,t)$ holds.

\textit{(only if)} In the case that $\sigma_o= \lambda$, $x\in \mathcal{X}(\lambda,t)$ holds. There exists a run $\bar{\rho}$ in $G_z$ such that $x$ is unobservably $t$-reachable from $x_0\in X_0$. Consequently, $f_x(v_{en}(\bar{\rho}))=x$, $t\in d(\bar{\rho})$ and $\bar{s}(\bar{\rho})=\varepsilon$ hold.
In the case that $\sigma_o=(e_{o1},t_1)\cdots(e_{ok},t_k)$, where $k\geq 1$, $0\leq t_1 \leq\cdots\leq t_k \leq t$ and $e_{oi}\in E_o \ (i\in\{1,\cdots, k\})$, it is $x\in \mathcal{X}(\sigma_o,t)$, implying that $x$ is $t$-reachable from $x_0\in X_0$. Consequently, there exists a run $\bar{\rho}$ in $G_z$ such that $t\in d(\bar{\rho})$, $f_x(v_{st}(\bar{\rho}))=x_0$ and $f_x(v_{en}(\bar{\rho}))=x$.
Let $\bar{\rho}$ be a sequence of $k$ runs $\bar{\rho}_i$ and $k$ observable events $e_{oi}$, where $i\in \{1,\cdots, k\}$, as $\bar{\rho}: \bar{\rho}_0{\xrightarrow{e_{o1}}}\cdots {\xrightarrow{e_{ok}}}\bar{\rho}_k$. Obviously, $\bar{s}(\bar{\rho})=e_{o1}\cdots e_{ok}$, $t-t_k \in d(\bar{\rho}_k)$ and $t_i - t_{i-1} \in d(\bar{\rho}_{i-1})$ hold for each $i\in \{1,\cdots, k\}$.
\end{proof}

In simple words, $x\in \mathcal{X}(\sigma_o,t)$ if there exists a run $\bar{\rho}$ in $G_z$ that produces the same logical observation as $\sigma_o$ from a discrete state in $X_0$ at $0$ and reaches $x$ at $t$. In turn, $x\in \mathcal{X}(\sigma_o,t)$ can be computed following run $\bar{\rho}$ in $G_z$.

\begin{proposition}
Consider a TFA $G$ with set of observable events $E_{o}$ that produces two timed observations $(\sigma_{o},t_i),(\sigma_{o},t)\in (E_o \times \mathbb{R}_{\geq 0})^\ast \times \mathbb{R}_{\geq 0}$, where $\sigma_{o}=(e_{o1},t_1)\cdots(e_{oi},t_i)$ and $t_1\leq\cdots\leq t_i \leq t$ for $i\geq 1$.
For each timed state $(x,\theta)$ reached by a timed evolution in $\mathcal{S}(\sigma_{o},t_i)$, and for each $v\in V_\lambda(x,z,t-t_i)$, where $z\in Z(x)$ and $\theta\in z$, it holds $f_x(v)\in \mathcal{X}(\sigma_{o},t)$ if Assumption RO holds.
\end{proposition}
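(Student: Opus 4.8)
The plan is to reduce the claim to Theorem~\ref{observationconsistent} by stitching together, inside the zone automaton $G_z$, a run that replays an evolution consistent with $(\sigma_o,t_i)$ up to time $t_i$ and then replays the $\lambda$-estimation witness over the window $[t_i,t]$.

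First I would fix a timed evolution $(\sigma(\rho_1),t_i)\in\mathcal{S}(\sigma_o,t_i)$ whose underlying timed run $\rho_1$ reaches $(x,\theta)$. Since $i\geq 1$, the observed event $e_{oi}$ occurs exactly at $t_i$, hence $t_{en}(\rho_1)=t_i$ and no time elapses at $x$ after the last transition of $\rho_1$; by Assumption RO that last observed transition resets the clock, so $\theta$ is precisely the (integer) value produced by the reset and it lies in the unique zone $z\in Z(x)$ appearing in the statement. Next I would lift $\rho_1$ to $G_z$: by the argument of the ``only if'' direction of Theorem~1 there is a run $\bar\rho_1$ of $G_z$ with $v_{st}(\bar\rho_1)=(x_0,[0,0])$ for some $x_0\in X_0$, with $t_i\in d(\bar\rho_1)$ realized with the terminal $\tau$-run contributing $0$, with $s(\bar\rho_1)=S(\sigma(\rho_1))$, and --- since $\rho_1$ stops at time $t_i$ with clock value $\theta\in z$ --- with $v_{en}(\bar\rho_1)=(x,z)$ and a trivial terminal $\tau$-run. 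Projecting, $P_l(s(\bar\rho_1))=P_l(S(\sigma(\rho_1)))=S(P(\sigma(\rho_1)))=S(\sigma_o)$.

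Then I would append the $\lambda$-estimation witness. By the definition of $V_\lambda(x,z,t-t_i)$ there is a run $\bar\rho_2$ of $G_z$ with $v_{st}(\bar\rho_2)=(x,z)$, $v_{en}(\bar\rho_2)=v$, $s(\bar\rho_2)\in E_{uo}^\ast$ and $t-t_i\in d(\bar\rho_2)$. Gluing $\bar\rho_1$ to $\bar\rho_2$ at their common extended state $(x,z)$ --- the trivial terminal $\tau$-run of $\bar\rho_1$ being absorbed into the initial $\tau$-run of $\bar\rho_2$ --- produces a run $\bar\rho$ of $G_z$ with $f_x(v_{st}(\bar\rho))=x_0\in X_0$, $f_x(v_{en}(\bar\rho))=f_x(v)$, and $P_l(s(\bar\rho))=P_l(s(\bar\rho_1))\cdot P_l(s(\bar\rho_2))=S(\sigma_o)\cdot\varepsilon=S(\sigma_o)$. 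Because $d(\cdot)$ is assembled by $\bigoplus$ over the duration ranges of the constituent $\tau$-runs and the absorbed terminal $\tau$-run of $\bar\rho_1$ had contributed $0$ to $t_i$, one gets $t=t_i+(t-t_i)\in d(\bar\rho)$. Theorem~\ref{observationconsistent} then yields $f_x(v)\in\mathcal{X}(\sigma_o,t)$, which is the assertion.

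I expect the delicate point to be the lifting step: guaranteeing that $\rho_1$ admits a zone-automaton lift $\bar\rho_1$ terminating \emph{exactly} at the extended state $(x,z)$ with $\theta\in z$ and with a trivial terminal $\tau$-run, so that the concatenation with $\bar\rho_2$ is legitimate. This is precisely where Assumption RO is used: it forces the latest observed event to reset the clock, so the timed state reached at $t_i$ is anchored at an extended state of $G_z$ and carries no clock value inherited from before the observation --- without RO the value $\theta$ could be an arbitrary real not aligned with any zone reachable along a lift of $\rho_1$ (and, as noted after Assumption~RO, the zone automaton would then fail to be an adequate abstraction), which would break the gluing. A secondary but routine point is the bookkeeping of duration ranges under the gluing, which follows directly from the definition of $d(\bar\rho)$ as a $\bigoplus$ of the duration ranges of its $\tau$-runs.
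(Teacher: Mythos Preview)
Your proposal is correct and follows essentially the same route as the paper's proof: take a timed run reaching $(x,\theta)$ with the last (observable) transition at $t_i$, lift it to a run $\bar\rho_1$ of $G_z$ ending at $(x,z)$ via Theorem~1, take the $\lambda$-estimation witness $\bar\rho_2$ from $(x,z)$ to $v$, concatenate, and invoke Theorem~\ref{observationconsistent}. You are more explicit than the paper about the gluing mechanics (the trivial terminal $\tau$-run, the $\bigoplus$ bookkeeping) and about why Assumption~RO is what anchors $v_{en}(\bar\rho_1)$ at exactly $(x,z)$, but the skeleton is identical.
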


\begin{proof}
Let a timed run $\rho:(x_0,0)\xrightarrow{(e_1,t_1)}(x_{(1)},\theta_{(1)})\cdots \xrightarrow{(e_k,t_k)}(x_{(k)},\theta_{(k)})\xrightarrow{(e_{oi},t_{i})}(x,\theta)$, where $x_{(1)},\cdots,x_{(k)}\in X$, $e_1,\cdots, e_k\in E$, and $P(\sigma(\rho))=\sigma_o$, the timed state $(x,\theta)$ is reached by the timed evolution $(\sigma(\rho),t_i)\in\mathcal{S}(\sigma_{o},t_i)$.
According to Theorem 1, it can be inferred that there exists a run in $G_z$ as $\bar{\rho}:(x_0,[0,0])\rightarrow\cdots\rightarrow (x,z)$, where $\theta\in z$.
If Assumption RO holds, it implies that there exists $z\in Z(x)$ such that $\theta\in z$ and $z\subseteq Reset ((x_{(k)},e_{oi},x))$.
Given $v\in V_\lambda(x,z,t-t_i)$, there exists a run $\bar{\rho}^\prime:(x,z)\rightarrow \cdots \rightarrow v$ such that $P_l(s(\bar{\rho}^\prime))=\varepsilon$ and $t-t_i\in d(\bar{\rho}^\prime)$.
Given $\bar{\rho}$ and $\bar{\rho}^\prime$, it can be inferred that $f_x(v)\in \mathcal{X}(\sigma_{o},t)$ according to Theorem 2.
\end{proof}

This proposition shows that the state estimation can be updated by computing associated $\lambda$-estimations under Assumption RO.
In other words, given $(x,\theta)$ reached by a timed evolution in $\mathcal{X}(\sigma_o,t_i)$, $\mathcal{X}(\sigma_o,t)$ can be inferred by computing $V_\lambda(x,z,t-t_i)$, where $z\in Z(x)$ and $\theta\in z$.

Based on the previous results, Algorithm~\ref{alg:stateestimation} summarizes the proposed approach to compute $\mathcal{X}(\sigma_o,t)$.
Consider a timed observation $(\sigma_o,t)$ with $\sigma_o=(e_{o1},t_1)\cdots(e_{on},t_n)\ (n\geq 1)$, where $e_{o1},\cdots, e_{on}\in E_o$.
The timed observation $(\sigma_o,t)$ is updated whenever an observable event $e_{oi}$ occurs at a time instant $t_i$, where $i\in\{1,\cdots,n\}$.
The algorithm provides the estimated states via a set of extended states $V_\lambda\subseteq V$ while time elapses in $[t_{i-1},t_i]$ with no event being observed, in addition to a set of extended states $\bar{V}_i\subseteq V$ of $G_z$ consistent with each new observation $(e_{oi},t_i)$, where $i\in\{1,\cdots,n\}$ and $t_0=0$.
Initially, it is imposed $\bar{V}_0=V_0$ and $\bar{V}_i=\emptyset$ for all $i\in\{1,\cdots,n\}$.
Then, for any $i\in\{1,\cdots,n\}$, the algorithm computes the $\lambda$-estimation from an extended state $(\bar{x},\bar{z})\in \bar{V}_{i-1}$ within $t_i-t_{i-1}$ implying the discrete states unobservably $(t_i-t_{i-1})$-reachable from $\bar{x}$ with a clock value in $\bar{z}$, and the set $\bar{V}_i$ is updated with the extended states reached by transitions labeled with $e_{oi}$ from the extended states in $V_{\lambda}$.
After the set $\bar{V}_n$ is determined, we initialize $V_\lambda$ to be empty and update $V_\lambda$ by including the $\lambda$-estimation for each $(\bar{x},\bar{z})\in \bar{V}_{n}$ within $t-t_{n}$.
Finally, we return the set of discrete states of $G$ associated with $V_\lambda$ as the set of discrete states consistent with $(\sigma_o,t)$.
\begin{algorithm}
\caption{State estimation of a TFA}
\label{alg:stateestimation}
\KwIn{A TFA $G$ with a set of initial discrete states $X_0$, a set of observable events $E_o\subseteq E$, a zone automaton $G_z=(V, E_\tau, \Delta_z, V_0)$, and a timed observation $(\sigma_o,t)$ from $0$ to $t\in\mathbb{R}_{\geq 0}$, where
$\sigma_o=(e_{o1},t_1)\cdots(e_{on},t_n)\ ( n\geq 1)$ and $t_1,\cdots,t_n\in \mathbb{R}_{\geq 0}$}
\KwOut{A set of discrete states $\mathcal{X} (\sigma_o, t)$}

let $\bar{V}_0=V_0$, $t_0=0$ and $X_{\lambda}=\emptyset$\

\For{each $i\in \{1,\cdots,n\}$ }
{
let $e=e_{oi}$, $\bar{V}_i=\emptyset$ and $V_\lambda=\emptyset$\

\For{each $(\bar{x},\bar{z})\in \bar{V}_{i-1}$}
{compute $V_\lambda(\bar{x},\bar{z}, t_i-t_{i-1})$\

let $V_\lambda=V_\lambda\cup V_\lambda(\bar{x},\bar{z}, t_i-t_{i-1})$\
}

\For{each $v\in V_\lambda$}
{
let $x=f_x(v)$ and $z=f_z(v)$\

\If{$\exists x^\prime\in X$ s.t. $z\subseteq \Gamma((x,e,x^\prime))$}
{
\For{each $z^\prime\in Z(x^\prime)$ s.t. $z^\prime\subseteq Reset((x,e,x^\prime))$}
{
let $\bar{V}_i=\bar{V}_i\cup \{(x^\prime,z^\prime)\}$\
}
}
}

}
let $V_\lambda= \emptyset$\

\For{each $(\bar{x},\bar{z})\in \bar{V}_{n}$}
{
compute $V_\lambda(\bar{x},\bar{z},t-t_n)$ \

let $V_\lambda=V_\lambda\cup V_\lambda(\bar{x},\bar{z}, t-t_{n})$\
}

\textbf{return} $\mathcal{X} (\sigma_o, t)=\{x\in X\mid (\exists z\in Z(x))(x,z)\in V_\lambda\}$
\end{algorithm}

The complexity of Algorithm~\ref{alg:stateestimation} depends on the size $n$ of the timed observation.
For each pair $(e_{oi},t_i)$, two {\em for} loops are executed: (1) the first {\em for} loop at Step 4 is executed at most $|V|$ times, computing $\lambda$-estimation whose complexity is $\mathcal{O}(q^3|X|)$, where $q=\max\limits_{x\in X} |Q_x|$ denotes the maximum number of zones for all discrete states, the complexity of this loop is $\mathcal{O}(q^4|X|^2)$; (2) the second {\em for} loop at Step 7 is executed at most $|V|$ times, and the {\em for} loop at Step 10 is executed at most $2q+1$ times; hence its complexity is $\mathcal{O}(q^2|X|)$.
Finally, the {\em for} loop at Step 13, analogously to the  {\em for} loop at Step 4, has complexity $\mathcal{O}(q^4|X|^2)$.
Overall, the complexity of Algorithm~\ref{alg:stateestimation} is $\mathcal{O}(n(q^4|X|^2 +q^2|X|)+q^4|X|^2)= \mathcal{O}(nq^4|X|^2)$.

This approach allows one to construct an offline observer, i.e., a finite structure that describes the state estimation for all possible evolutions.
In more detail, during the online phase when the timed observation is updated by the latest measured observable event at a time instant, one can simply check to which interval (among a finite number of time intervals) the time elapsed between two consecutive updated observations belongs and to which discrete state the measured observable event leads.

\begin{table*}[t]
\caption{State estimation of the TFA $G$ in Fig. 1 with $\bar{X}_0=X_0$, $t_0=0$ and $(\sigma_o,t), t\in [0,4]$.}
\begin{center}\scalebox{0.55}{
\begin{tabular}{c|c|c|c|c|c}
\toprule[1pt]
$k$ & $\sigma_o$ &  \tabincell{c}{Time interval $I$\\ ($t\in I$)} & $V_\lambda=\bigcup\limits_{v\in \bar{V}_{k-1}} V_\lambda(f_x(v),f_z(v),t-t_{k-1}), t\in I$ & \tabincell{c}{$\mathcal{X}(\sigma_o,t)$}  & $\bar{V}_k$\\
\hline
\multirow{3}*{$1$}   &  \multirow{3}*{$\lambda$}  &  [0,0] &   $\{(x_0,[0,0]), (x_2,[0,0])$\}&  $\{x_0, x_2\}$  & \multirow{3}*{   \tabincell{c}{$\{(x_2,[0,0]),(x_4,[0,1])\}$}  }\\
&     &   (0,1) & $\{(x_0,(0,1)), (x_2,(0,1))\}$ &  $\{x_0, x_2\}$ & \\
&     &   [1,1] & $\{(x_0,[1,1]),(x_1,[1,1]),(x_2,[1,1]),(x_3,[1,1])\}$& $\{x_0, x_1, x_2, x_3\}$&\\
\hline
\multirow{6}*{$2$}   &  \multirow{6}*{$(a,1)$}  &  [1,1] &  $\{(x_2,[0,0]),(x_3,[0,0]),(x_4,[0,1])\}$ &  $\{x_2,x_3,x_4\}$  & \multirow{6}*{ $\{(x_2,[0,0])\}$ }\\
&     &   (1,2) & $\{(x_2,(0,1)), (x_3,[0,0]), (x_3,(0,1)), (x_4,[0,1])\}$ &  $\{x_2,x_3,x_4\}$ & \\
&     &   [2,2] & $\{(x_2,[1,1]), (x_3,[0,0]), (x_3,(0,1)), (x_3,[1,1]),(x_4,[0,1])\}$ &  $\{x_2,x_3,x_4\}$ & \\
&     &   (2,3) & $\{(x_2,(1,2)), (x_3,(0,1)), (x_3,[1,1]), (x_3,(1,2)),(x_4,(1,+\infty))\}$ &  $\{x_2,x_3,x_4\}$ &\\
&     &   [3,3] & $\{(x_2,[2,2]), (x_3,[1,1]), (x_3,(1,2)), (x_3,[2,2]),(x_4,(1,+\infty))\}$ &  $\{x_2,x_3,x_4\}$ & \\
\hline
\multirow{2}*{$3$}   &  \multirow{2}*{$(a,1)(a,3)$}  &  [3,3] & $\{(x_2,[0,0])\}$ & $\{x_2\}$  & \multirow{2}*{-}\\
&     &   (3,4) & $\{(x_2,(0,1))\}$& $\{x_2\}$ & \\
&     &   [4,4] & $\{(x_2,[1,1]),(x_3,[1,1])\}$& $\{x_2,x_3\}$ & \\
\bottomrule[1pt]
\end{tabular}}
\end{center}\label{estimation}
\end{table*}

\begin{example}\label{stateestimation}
Consider a timed observation $(\sigma_o,4)$, where $\sigma_o=(a,1)(a,3)$, produced by partially observed TFA $G$ in Fig.~\ref{tfaG} with $E_o=\{a\}$ and $E_{uo}=\{b,c\}$.
It implies that the observable event $a$ has been measured twice at $t_1=1$ and $t_2=3$, respectively, while the current time instant is $t=4$.
Table~\ref{estimation} shows how the state estimation is updated while time elapses in the time interval $[0,4]$ taking into account the two observations of event $a$.$\hfill\square$
%
%
\end{example}

\section{Conclusions and future work}
In this paper, we consider timed automata with a single clock.
Assuming that certain events are unobservable, 
we deal with the problem of estimating the current state of the system as a function of the measured timed observations.
By constructing a zone automaton that provides a purely discrete description of the considered TFA, the problem of investigating the reachability of a discrete state in the TFA is reduced to the reachability analysis of an extended state in the associated zone automaton.
Assuming that the clock is reset upon each observable transition, we present a formal approach that can provide the set of discrete states consistent with a given timed observation and a range of the possible clock values.
Based on the timed automata model in this paper, it is worthy of investigating the state estimation approach regarding multiple clocks.
In addition, the proposed approach allows to construct an offline observer, which is fundamental to address problems of diagnosis, diagnosability, and feedback control, that are reserved for future works.

\end{document}